\newcommand{\bigone}{\mathds{1}}
\newcommand{\Ex}{\mathbb{E}}
\newcommand{\R}{\mathbb{R}}
\newcommand{\Cset}{\mathcal{C}}
\newcommand{\Ract}{\mathcal{A}}
\newcommand{\qmat}{\mathbf{Q}}
\newcommand{\Bset}{\mathcal{B}}
\newcommand{\Tset}{\mathcal{T}}
\newcommand{\Sset}{\mathcal{S}}
\newcommand{\Dset}{\mathcal{D}}
\newtheorem{theorem}{Theorem}[section]
\newcommand{\Ident}{\mathbf{I}}
\newcommand{\Zblock}{\mathbf{0}}
\newcommand{\nnc}{\text{nnc}}
\newcommand{\xvec}{\mathbf{x}}
\newcommand{\uvec}{\mathbf{u}}
\newcommand{\vvec}{\mathbf{v}}
\newcommand{\hmat}{\mathbf{H}}
\newtheorem{lemma}[theorem]{Lemma}
\title{Performance Evaluation of Advanced Relaying Protocols in Large Wireless Networks}
\author{Andr\'es~Altieri,~\IEEEmembership{Member,~IEEE}, and~Pablo~Piantanida,~\IEEEmembership{Senior~Member,~IEEE}
    \thanks{A. Altieri is with Universidad de Buenos Aires and CSC-CONICET, Buenos Aires, Argentina. Email:  aaltieri@fi.uba.ar, andres.altieri@conicet.gov.ar.}
     \thanks{P. Piantanida is with Laboratoire des Signaux et Syst\`emes (L2S, UMR8506), CentraleSup\'elec-CNRS-Universit\'e Paris-Sud, Gif-sur-Yvette, France. Email: pablo.piantanida@centralesupelec.fr.} 
  \thanks{This work was partially supported by a Level II Bilateral Cooperation Project of the CNRS, MinCyT and CONICET.
  	 The work of P. Piantanida was supported by the PICS MoSiME from CNRS. All numerical simulations were ran in the TUPAC supercomputer at CSC-CONICET (http://tupac.conicet.gov.ar).} 
}
\begin{document}
\maketitle

\begin{abstract}
This paper studies the performance of some state-of-the-art cooperative full-duplex relaying protocols in the context of a large wireless network modelled using stochastic geometry tools. We investigate the outage behaviour for different cooperative schemes, namely, decode-and-forward, noisy-network coding and mixed noisy-network coding, considering fading, path loss and interference from other sources and relays. 
Due to the high complexity of the network topology and the protocols considered, a closed-form analysis is not possible, so our study is performed through extensive but careful numerical simulations, sweeping a large number of relevant parameters. Several scenarios of particular interest are investigated. In this way, insightful conclusions are drawn regarding the network regimes in which relay-assisted cooperation is most beneficial and the potential gains that could be achieved through it. 
\end{abstract}
\begin{IEEEkeywords}
Cooperative communications; decode-and-forward; compress-and-forward; noisy network coding; full duplex; outage probability; Poisson point process; stochastic geometry.
\end{IEEEkeywords}

%%%%%%%%%%%%%%%%%%%%%%%%%%%%%%%%%%%
\section{Introduction}
\label{sec:intro}
%%%%%%%%%%%%%%%%%%%%%%%%%%%%%%%%%%%

For many years, cooperative wireless communications have been an active area of research, showing promising gains in terms of throughput and reliability. One of the most interesting scenarios is that in which cooperation takes place through the use of wireless relays, which aid a transmitter-receiver pair either in a full-duplex or half-duplex fashion \cite{laneman_cooperative_2004, kramer_cooperative_2005}.  In their seminal work, El Gamal and Cover~\cite{cover_capacity_1979} introduced the main cooperative relaying schemes, and since then, these schemes have been improved upon and new ones have been derived (e.g. see~\cite{kramer_cooperative_2005, lim_noisy_2011, Gastpar2011} and references therein).  In many cases, the evaluation of the performance of these protocols via information-theoretic tools are restricted  to the context of uncorrelated additive white Gaussian noises present at the receivers, and only a limited number of users is considered. However, in a more realistic scenario, there will be a large number of source-destination pairs transmitting and the impairments introduced by wireless interference will be much larger than the effects of noise. Furthermore, interference at the receivers will be correlated because the interference comes from the same sources. Finally, in a large wireless network users interact and may cause adverse interference conditions to each other.
In this context, stochastic geometry~\cite{stochastic_geometry2009, haenggi_stochastic_2009} has emerged as a useful tool to model and study different aspects of large wireless networks, allowing the development of closed-form solutions to many interesting problems. Nevertheless, as more complex network architectures, protocols and interactions between the nodes are considered, closed form or approximate solutions become increasingly hard to obtain or even prohibitive. As a matter of fact, as the network topology becomes more complex, the joint distribution of the interferences at several points of the network cannot be characterized. Another common situation is that as more complex communication protocols are considered, the complexity and the number of error events involved in the decoding procedure grow very fast. This is specially true in the context of advanced relaying protocols which may use many relays and sophisticated decoding strategies. For this reason, in some of these complex scenarios it may be interesting to perform numerical simulations, with the aim of gaining insight and drawing qualitative conclusions regarding certain problems.

In this paper, we propose to study the performance of advanced relaying protocols in the context of a large wireless network modeled using stochastic geometry tools. The network is composed of source-destination pairs which attempt to communicate with the help of full-duplex relays. The transmissions in the network are affected by path-loss and slow fading, and each source and its relays cause interference to other relays and destinations in the network. This implies that, as more relays are added, the overall interference increases. Also, the relays that help each source are drawn from a spatial model, meaning that as more relays help a source-destination pair, the further away they will be and their contribution to improving the quality of the links will (on average)   diminish. These two simultaneous effects introduce a balance between cooperation and interference which would also be present in a real network. 
The main metric to evaluate the performance of the protocols is the \emph{outage probability} (OP) that is, the probability that, due to instantaneous conditions, the channel cannot support the rate attempted by a transmitting user.  The protocols considered are decode-and-forward \cite{cover_capacity_1979}, noisy network coding \cite{lim_noisy_2011}, which is an extension of the well-known compress-and-forward scheme \cite{cover_capacity_1979}, %for multi-source multicast networks
and mixed noisy-network  coding \cite{Behboodi2015}, which combines both protocols, allowing some of the relays to perform decode-and-forward and others noisy network coding. 
 The transmissions in the network are decoded treating other transmissions, which are not helpful, as noise, and each destination can choose which subset of its helping relays to use to decode the message from its source.
We study the performance of these protocols under different network setups, considering relevant parameters such as relay density, position, and transmission power, and conclusions are drawn about the possible gains that could be achieved, and which scenario is the most favorable.

\subsection{Related Works}

There have been some  works investigating cooperative communication with  relays in large wireless networks via stochastic geometry tools. In \cite{Crismani2015} and  \cite{Tanbourgi2013} the authors considered an outage and diversity-order analysis of a half-duplex selection decode-and-forward protocol in which the interference comes from a network modelled as an homogeneous Poisson Point Process (PPP). In \cite{Altieri_JSAC} the OP of full-duplex and half-duplex decode-and-forward and compress-and-forward schemes for a single relay affected by interference of a Poisson network of interferers was also investigated. In these scenarios it is assumed that only a single relay is active at once, either in full-duplex or half-duplex fashion. Moreover the model lacks of generality since it does not consider that interference is also  caused by relays of other users. Therefore, the analysis performed does not take into account that in real-world networks as more users request  relays there will be an overall increase of interference up to the point where cooperation may become useless or even detrimental. Beside this the relays' positions are assumed to be fixed and in some cases superimposed which does not consider that as more relays are added to communicate, they will likely be further away from the source and destination, reducing the possible improvement of cooperation. 

In \cite{ARVPG_2013} some of the above mentioned issues have been addressed   where the authors  considered a decentralized wireless network in which each source-destination pair have a relay that can be active or not according to some probability. Cooperation can only take place using decode-and-forward, and the authors derive the optimal relay activation probability and the gains that can be achieved in terms of the OP according to the position of the relays. The main focus of these works was  at finding closed form results which are afterwards validated through simulation.  Although whenever it is possible closed-form expressions are very valuable, this approach appears to be highly restrictive since it requires that the network topology and cooperative protocols to be much simpler than the ones studied in the present work, which aims at drawing conclusions in more complex scenarios directly by simulation.  The setup proposed in this paper can be considered as an extension of the preliminary one first investigated in~\cite{ARVPG_2013}. However, in the present work each source-destination pair can take advantage of several different relays and the cooperative protocols considered are the state of the art and thus these are expected to perform better. Furthermore, several additional networks parameters are introduced and studied. 

%The most considered prototocol is decode-and-forward in its full-duplex or half-duplex versions, and in general, only one relay actively cooperates with a destination. 
%For example, in \cite{Crismani2015} and  \cite{Tanbourgi2013} the authors  In both cases, the effect of relays being active as a part of the interference is not considered and the relays are fixed. 

%The protocol consists of two phases: in the first one the source transmits a message which is received by the relays and the destination. In the second stage one of the relay, if at least one can decode the message from the source, repeats the message to the destination. The destination attempts to decode the message at the end of each phase. In the second phase it uses either  selection combining (SC) or maximal ratio combining (MRC) of the signals received in each slot. In the authors perform a diversity order analysis for a half-duplex relaying protocol 
%The only relays in the network are those of the  source-destination pair under analysis, while the interference is caused by a Poisson process. In this sense, the network is not symmetric. The relays are not spatially distributed so a balance is not considered where more relays are further away from the transmitters.

%In \cite{Fan2017} the authors consider a half-duplex selection relaying DF protocol

\subsection{Main Contributions}
The main contribution of this work is studying the performance of some state-of-the-art full-duplex cooperative relaying protocols in a large interference-limited wireless network. This setup is very different to the usual AWGN case in which communications are hampered by uncorrelated Gaussian noise at each receiver. In our setup, additional relays create more interference in the network, and also, since the relays come from a spatial model, activating more relays implies that the relays will on average be further away and the benefits of their activation will be smaller. In this framework, we consider three representative protocols, namely, opportunistic decode-and-forward (ODF)~\cite{1362898}, noisy network  coding (NNC)~\cite{lim_noisy_2011} and mixed noisy-network coding (MNNC)~\cite{Behboodi2015}. For each of these protocols we consider two versions: a standard version in which all the relays of each source-destination pair can transmit independently of their channel qualities towards the source or destination, and \emph{interference aware} versions, in which the relays can transmit only if their channel towards its destination or its source are above a certain threshold. These interference aware protocols aim at reducing the interference in the network by turning off relays which most likely would not help their corresponding source-destination pairs. Since closed forms or close approximations are almost impossible to obtain, we have performed extensive simulations on the network and provide conclusions regarding the dependence of the OP of each protocol with respect to the relay density, the relative transmission powers between relay and sources, the number of active relays for each source-destination pair and the position of the relays. Also, we compare the performance of each of the protocols in their standard and interference aware forms. Testing these complex network models and protocols is a very difficult task that requires large-scale computer-based simulations whose intrinsic difficulty should not be underestimated. This is because the outage events become increasingly complex for more advanced protocols, and also the number of events grows exponentially with the number of relays. Finally, the interference time signals are correlated by the spatial distribution of the nodes in a very complex fashion.

The paper is organized as follows. In Section \ref{sec:model}, we present the mathematical model of the network, and in Section \ref{sec:rates} we present the protocols considered and their outage events. 
In Section \ref{sec:outage_ev}, we show how the outage events are evaluated for the network, and in Section \ref{sec:numerical} we present the numerical results. Finally, in Section \ref{sec:conclusions} we summarize some conclusions and comments.
 
\subsubsection*{Notation} 
$^T$ denotes transpose and $^*$ complex conjugation for scalars and transpose-conjugate for matrices or vectors. $I(\cdot;\cdot)$ and $I(\cdot;\cdot|\cdot)$ denote mutual information and conditional mutual information respectively~\cite{cover_elements_2006}. $h(\cdot, \cdot)$ denotes differential entropy~\cite{cover_elements_2006}.

%%%%%%%%%%%%%%%%%%%%%%%%%%%%%%%%%%%
\section{Network Model} \label{sec:model}
%%%%%%%%%%%%%%%%%%%%%%%%%%%%%%%%%%%

In the sequel we introduce the network model and our main assumptions. We consider a planar network model in which source nodes attempt to communicate a message to their destination with the cooperation of other nearby nodes which act as relays.  Relays are assumed to work in full-duplex mode on the same time slots and frequency bands as the sources. The main modelling assumptions are the following:
\begin{itemize}
	\item The spatial distribution of the sources is modeled as a homogeneous PPP  $\Phi_s$ of intensity $\lambda_s$. Each source has a destination which is located at a distance $D$ from the source in a random uniform direction from the source.
	
	\item Each source-destination pair has a set of $n_r$ potential relays. In order to define the position of the relays we propose the following model: we assume that the nodes which may act as relays are distributed in space as a homogeneous PPP of intensity $\lambda_r$. Then, for each source-destination pair a point, which lies on the line between them, is chosen. For example, for a source at $x$ and a destination at $d_x$, a point $c_x$ is chosen as:
	\begin{equation}
	c_x = x + \varepsilon \left(d_x-x\right), \label{eq:center}
	\end{equation}
	where $0 \leq \varepsilon \leq 1$ (the same for all sources). Finally, the $n_r$ potential relays of the pair are found as the points of $\Phi_r$ which are closest to $c_{x}$. Notice that $\varepsilon$ is a network setup parameter which allows us to control whether the relays will be chosen closer to the source or the destination on average. 
		
This model involves  the intrinsic difficulty that, since all the relays come from $\Phi_r$, it is possible that  different source-destination pairs --which are close to each other-- will choose the same potential relay. However, this event is very unlikely provided that the density of potential relays is much larger than the density of sources ($\lambda_r \gg \lambda_s$), but it complicates the implementation of the network because this  possibility of sharing relays has to be considered. For this reason, we introduce the following simplifying assumption: the marginal distribution of the position of the potential relays of each cluster are the same as if selected from $\Phi_r$ by the above procedure, but are independent among clusters. With this assumption, we can determine the density of the positions of the potential relays for each cluster and then generate the potential relays for each source-destination pair independently, without needing to explicitly draw the PPP $\Phi_r$, and exhaustively search for the potential relays for each source-destination pairs.
We refer to each group formed by a source-destination pair and their corresponding $n_r$ potential relays a \emph{cluster}. The following lemma gives this distribution and allows the simulation of $n_r$ closest neighbours of each source-destination pair:

	\begin{lemma}
		Let $X_1, X_2, \ldots$ be the positions of the nearest nodes of a homogeneous PPP in $\R^2$ to a fixed point $x$, relative to this point, in order of increasing distance. Then
		%\begin{equation}
		$\{\|X_1\|^2, \|X_2\|^2, \ldots \}$
		%\end{equation}
		forms a homogenous PPP of intensity: $\lambda_r \pi$ in $(0, \infty)$, and the phases of these points are independent uniform random variables in $[0,2\pi)$, independent of the process of distances.
	\end{lemma}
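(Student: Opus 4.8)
The plan is to obtain the result as a direct consequence of the mapping theorem for Poisson point processes. First I would introduce polar coordinates around the fixed point $x$: writing each node $y$ of the process as $y = x + \rho(\cos\theta,\sin\theta)$ with $\rho \geq 0$ and $\theta \in [0,2\pi)$, consider the measurable map $T:\R^2\setminus\{x\} \to (0,\infty)\times[0,2\pi)$ defined by $T(y) = \big(\|y-x\|^2,\,\theta(y)\big)$. This map is a bijection off a Lebesgue-null set (the single point $x$), so it creates no atoms, and hence the image of the homogeneous PPP under $T$ is again a Poisson point process, whose intensity measure is the pushforward of $\lambda_r\,dy$.

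Second, I would compute that pushforward explicitly. In polar coordinates $\lambda_r\,dy = \lambda_r\,\rho\,d\rho\,d\theta$, and the substitution $r = \rho^2$, $dr = 2\rho\,d\rho$, turns this into $\tfrac{\lambda_r}{2}\,dr\,d\theta$ on $(0,\infty)\times[0,2\pi)$. Therefore $T$ maps the process to a PPP on $(0,\infty)\times[0,2\pi)$ with intensity measure $\tfrac{\lambda_r}{2}\,dr\,d\theta = \big(\lambda_r\pi\,dr\big)\otimes\big(\tfrac{1}{2\pi}\,d\theta\big)$, that is, a product of the measure $\lambda_r\pi\,dr$ on $(0,\infty)$ and the uniform probability measure on $[0,2\pi)$.

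Third, I would invoke the standard characterization of marked Poisson processes: if a PPP on a product space $E\times M$ has intensity measure of the product form $\mu\otimes\kappa$ with $\kappa$ a probability measure, then it is distributed as the PPP on $E$ with intensity $\mu$ equipped with i.i.d. $\kappa$-distributed marks that are independent of the ground process. Applying this with $E=(0,\infty)$, $\mu = \lambda_r\pi\,dr$, $M=[0,2\pi)$ and $\kappa$ uniform gives precisely the claim: the squared distances form a homogeneous PPP of intensity $\lambda_r\pi$ on $(0,\infty)$, and the associated phases are i.i.d. uniform on $[0,2\pi)$ and independent of the distances. Since $\lambda_r\pi\cdot|(0,a]|<\infty$ for every $a>0$, the image process is a.s.\ locally finite with no accumulation point, so its atoms can be enumerated in increasing order; the $i$-th smallest value is then exactly $\|X_i-x\|^2$, where $X_i$ is the $i$-th nearest node, and the ordering of nodes by distance coincides with the ordering of the squared distances.

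Because this is a routine application of known Poisson machinery, I do not anticipate a real obstacle. The only points requiring care are measure-theoretic bookkeeping: verifying that $T$ is a bijection away from a null set so the mapping theorem applies without creating atoms, confirming the product factorization of the pushforward intensity so the i.i.d.\ uniform marks can be read off, and checking local finiteness of the image so that the ``order of increasing distance'' enumeration in the statement is well defined.
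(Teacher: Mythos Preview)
Your proposal is correct and is the standard argument for this fact: the mapping theorem applied to $y\mapsto(\|y-x\|^2,\theta(y))$, followed by the product-intensity characterization of marked Poisson processes. The paper itself does not give a proof but simply cites a reference (Resnick), so there is nothing further to compare; your write-up supplies exactly the routine computation that the citation stands in for.
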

\begin{proof}
	See for example \cite{Resnick_adventures}.
\end{proof}
	\item The sources and the relays use Gaussian signaling, that is, the codebooks used are generated as draws of independent complex zero-mean Gaussian random variables of variance $P_s$ for sources and $P_r$ for relays. The communication channels are narrow-band with flat-fading, and transmissions are attenuated both by path loss and independent Rayleigh fading, that is, the channel between to points $x$ and $y$ is:
	\begin{equation}
	g_{x,y} = h_{x,y} \sqrt{l(x,y)},
	\end{equation}
	where $h_{x,y}$ is a complex circularly symmetric Gaussian (CCSG) signal fading coefficient with zero-mean and unit variance; $l(x,y) \triangleq \|x-y\|^{-\alpha}$ denotes the power path loss function ($\alpha >2$), and the channels are independent between points. This means that the power fading coefficients $|h_{x,y}|^2$ are independent unit-mean exponential random variables.

	We assume that a destination is located at the origin, with its source located at $p_{\mathbf{s}} = (-D, 0)$, and they are called the typical destination and source, respectively.  The $n_r$ potential relays of the typical source-destination pair are centered around the point $c_\mathbf{s}$ according to (\ref{eq:center}) and are denoted, in order of distance to this point, as $\{r_1, \ldots, r_{n_r}\}$. We call the ``typical cluster'' the group of users formed by the destination at the origin, its source and its potential relays.
	Channel gains from the source to its potential relays and the destination are denoted as $g_{s,i}$ ($i=1,\ldots,n_r$) and $g_{s,d}$, respectively. Channel gains from the potential relays to the other relays and the destination are denoted as $g_{i,j}$ and $g_{i,d}$, ($i,j=1,\ldots,n_r$, $i\neq j$), respectively. Channel gains from an interfering source at $x$ to the typical relays and the destination  are denoted as $\tilde{g}_{x,i}$, $i=1,\ldots, n_r$, and  $\tilde{g}_{x,d}$, respectively. Finally, the channel gains from the $i$-th relay of a source at $x$ to the $j$-th relay and the destination of the typical cluster are denoted as $\tilde{g}_{x,i,j}$, and $\tilde{g}_{x,i,d}$, $i,j=1,\ldots,n_r$, respectively. In Fig. \ref{fig:netrep} we can see a representation of the network and the relevant channels.%, and in Table \ref{tab:symbols} there is a list of all the channel coefficients defined.
\end{itemize}
%\begin{table}[h!]\centering
%\caption{Main symbols used for the definition of the model.}\label{tab:symbols}
%	\begin{tabular}{c|c}
%		\textbf{Symbol} & \textbf{Meaning} \\ \hline
%		$n_r$ & Number of potential relays of each cluster \\
%		$g_{s,i}$,  $i=1...n_r$ & Channel from the typical source to its potential relays \\ 
%		$g_{s,d}$ & Channel from the typical source to its destination \\ 
%		$g_{i,d}$ & Channel from the $i$-th relay of the typical cluster to its destination \\ 
%		$g_{i,j}$, $i,j=1,\ldots,n_r$, $i \neq j$ & Channel from the relay $i$ to the relay $j$ of the typical cluster \\ 
%		$\tilde{g}_{x,i,j}$, $i,j=1,\ldots,n_r$ & Channel from the $i$-th potential relay of a source at $x$ to the $j$-th relay of the typical cluster \\ 
%		$\tilde{g}_{x,i,d}$, $i=1,\ldots,n_r$ & Channel from the $i$-th potential relay of a source at $x$ to the typical destination \\ 
%		$\tilde{g}_{x,i}$, $i=1,\ldots,n_r$ & Channel from a source at $x$ to the $i$-th relay of the typical cluster \\ 
%		$\tilde{g}_{x,d}$ & Channel from a source at $x$ to the destination of the typical cluster \\ 
%		$P_s$  & Transmission power of the sources \\ 
%		$P_r$ & Transmission power of the relays \\ 
%	\end{tabular}
%	
%\end{table}
\begin{figure}
	\centering
	\includegraphics[width=.8\linewidth]{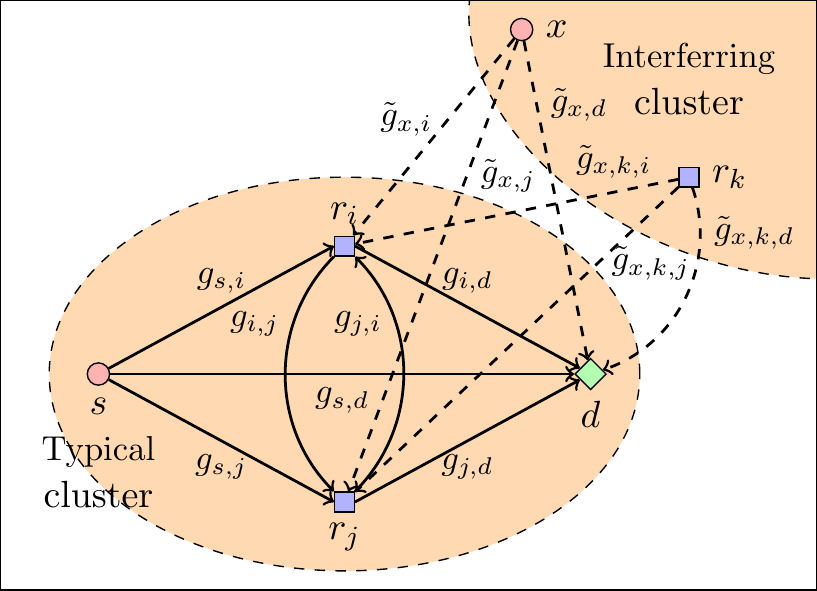}
	\caption{Representation of the network and relevant channel involved. $d$ denotes the typical destination at origin, and $s$ is the typical source at $p_\mathbf{s}=(-D,0)$.}
	\label{fig:netrep} 
\end{figure}
In the interference aware protocols, a destination may have some of its $n_r$ potential relays turned off, if their channels toward their destination or from their source are not considered to be strong enough. For a source located at $x$ we denote by $\Ract_x \subseteq \{1,\ldots, n_r\}$ the set of the indexes of the potential relays that have not been turned off due to bad channels while for the typical cluster, we denote this set by $\Ract_s$. 
Some of these relays which have not been turned off will be transmitting, depending on the protocol employed. For example, in ODF, only the relays which have not been turned off and can decode the message from the source will transmit. We denote by $\Bset_x\subseteq \Ract_x$ the relays of the source at a point $x$ which are transmitting, and by $\Bset_s$ the relays of the typical cluster which are transmitting.

With this model we can now determine the signals received in each node of the typical cluster, and the power and correlation of their corresponding interference signals. We denote by $Y_{d,k}$ and $Y_{i,k}$ the signals received at the destination and the $i$-th relay of the typical cluster at the time instant $k$, respectively, which we may write as:
\begin{align}
Y_{d,k} &= g_{s,d} X_{s,k} + \sum_{m=1}^{n_r} \bigone_{\{m \in \Bset_s\}} g_{m,d} X_{m,k} + Z_{d,k},\\
Y_{i,k} &= g_{s,i} X_{s,k} + \sum_{m=1, m\neq i}^{n_r} \bigone_{\{m \in \Bset_s\}} g_{m,i} X_{m,k} + Z_{i,k}, \label{eq:Yik}
\end{align}
where $X_{s,k}$ and $X_{m,k}$ denote the symbols transmitted by the typical source and its $m$-th relay at time $k$. $Z_{d,k}$ and $Z_{i,k}$ are the interference time signals, which we may write as:
\begin{align}
Z_{d,k} &= \sum_{x \in \Phi_s} \tilde{g}_{x,d} \tilde{X}_{x,k} +  \sum_{m=1}^{n_r}  \bigone_{\{m \in \Bset_{x}\}} \tilde{g}_{x,m,d} \tilde{X}_{x,m,k}, \\
Z_{i,k} &= \sum_{x \in \Phi_s} \tilde{g}_{x,i} \tilde{X}_{x,k} +  \sum_{m=1}^{n_r} \bigone_{\{m \in \Bset_{x}\}} \tilde{g}_{x,m,i} \tilde{X}_{x,m,k},
\end{align}	
%\begin{align}
%Z_{d,k} &= \sum_{\mathbf{x} \in \Phi_s} G_{\mathbf{x},d} \tilde{X}_{\mathbf{x},k} +  \sum_{m=1}^{n_r}  \bigone_{\{m \in \Bset_{\mathbf{x}}\}} G_{\mathbf{x},m,d} \tilde{X}_{\mathbf{x},m,k}, \\
%Z_{i,k} &= \sum_{x \in \Phi_s} g_{x,i} X_{x,k} +  \sum_{m=1}^{n_r} \bigone_{\{m \in \Bset_{x}\}} g_{x,m,i} X_{x,m,k},
%\end{align}	
where $\{\tilde{X}_{x,k}\}$ and $\{\tilde{X}_{x,m,k}\}$ are the symbols transmitted by the interfering sources and their relays at time $k$. If we condition on the point process and the fading coefficients, under the independent Gaussian signalling hypothesis, the received signals and the interferences are Gaussian, and their distribution is the same for every time instant. So, conditioning, we can find the conditional variance and correlation between the interference time signals $\left\{Z_{d,k}, Z_{1,k}, \ldots , Z_{n_r,k}\right\}$ to fully characterize their joint conditional distribution. Thus, the random interference powers at the nodes of the typical cluster are (for every time instant):
\begin{align}
I_d &\triangleq \Ex\left[|Z_{d,k}|^2\right] \\& =  \sum_{x \in \Phi_s} |\tilde{g}_{x,d}|^2 \Ex\left[|\tilde{X}_{x,k}|^2\right] +  \sum_{m=1}^{n_r}  \bigone_{\{m \in \Bset_{x}\}} |\tilde{g}_{x,m,d}|^2 \Ex \left[|\tilde{X}_{x,m,k}|^2\right] \\
&= \sum_{x \in \Phi_s} \left[P_s |\tilde{g}_{x,d}	|^2  + P_r \sum_{m=1}^{n_r}  \bigone_{\{m \in \Bset_{x}\}} |\tilde{g}_{x,m,d}|^2\right], \\
I_{r_i} &\triangleq \Ex\left[|Z_{i,k}|^2\right] \\&=  \sum_{x \in \Phi_s}\left[P_s |\tilde{g}_{x,i}|^2  + P_r \sum_{m=1}^{n_r} \bigone_{\{m \in \Bset_{x}\}} |\tilde{g}_{x,m,i}|^2\right] 
\end{align}
with $i=1,\ldots,n_r$.
The correlation between the interference time signals is (for any time instant):
\begin{align}
\hspace{-4mm}\beta_{r_i,r_j} &\triangleq \Ex \left[ Z_{i,k} Z_{j,k}^* \right] \\
& = \sum_{x \in \Phi_s} \tilde{g}_{x,i}\tilde{g}_{x,j}^* \Ex\left[|X_{x,k}|^2\right] +  \sum_{m=1}^{n_r} \bigone_{\{m \in \Bset_{x}\}} \tilde{g}_{x,m,i} \tilde{g}_{x,m,j}^*  \Ex\left[|X_{x,m,k}|^2\right]\\
& = \sum_{x \in \Phi_s} \hspace{-1mm} \left[P_s \tilde{g}_{x,i}\tilde{g}_{x,j}^*  + P_r \sum_{m=1}^{n_r} \bigone_{\{m \in \Bset_{x}\}} \tilde{g}_{x,m,i} \tilde{g}_{x,m,j}^*\right] \ \ \ i,j=1,\ldots,n_r, i \neq j,
\end{align}
\begin{align}
\beta_{r_i,d} &\triangleq \sum_{x \in \Phi_s} \tilde{g}_{x,i} \tilde{g}_{x,d}^* \Ex\left[|X_{x,k}|^2\right]  +  \sum_{m=1}^{n_r} \bigone_{\{m \in \Bset_{x}\}} \tilde{g}_{x,m,i}\tilde{g}_{x,m,d}^* \Ex\left[|X_{x,m,k}|^2\right]\\
&= \sum_{x \in \Phi_s}  \left[P_s \tilde{g}_{x,i} \tilde{g}_{x,d}^*  + P_r \sum_{m=1}^{n_r} \bigone_{\{m \in \Bset_{x}\}} \tilde{g}_{x,m,i}\tilde{g}_{x,m,d}^*\right] \ \ \ i=1,\ldots,n_r. 
\end{align}

%%%%%%%%%%%%%%%%%%%%%%%%%%%%%%%%%%%
\section{Information-Theoretic Rates of Selective Cooperative Relaying} \label{sec:rates}
%%%%%%%%%%%%%%%%%%%%%%%%%%%%%%%%%%%

In this section we describe the protocols under study and their corresponding achievable rates from an information-theoretic perspective. 
%The typical cluster has $n_r$ potential relays, some of which may be deactivated because of poor channels. The ones who remain active are the ones in set $\Ract_s \subseteq \{1,\ldots,n_r\}$.
We define the following sets:
\begin{itemize}
\item $\Dset \subseteq \Ract_s$: relays from the typical cluster that can decode the transmission from the typical source  treating the transmissions of all the other typical relays and from the other clusters as noise.
\item $\Ract_s \setminus \Dset$ is the  set of typical relays that cannot decode the transmission from the source.
\end{itemize}
We assume that the source is unaware of the channel coefficients towards the destination and the relays, meaning that it cannot adapt its transmission rate, and has to choose a rate $R$ and attempt to communicate at that rate. For a chosen protocol, the transmission will fail and an outage will be declared whenever $R$ is larger than the rate that is achievable for the particular realization of the network. We now define the protocols and their respective outage events:
\begin{itemize}
	\item \emph{Opportunistic Decode-and-Forward~\cite{1362898}}: in this protocol, the relays which can decode the transmission from the source, cooperate as a set of distributed antennas. The potential relays in $\Ract_s \setminus \Dset$ remain inactive, that is, $\Bset_s=\Dset$. In this way an outage will be declared whenever the attempted rate $R$ does not satisfy:
	\begin{equation}
	R < I(X_s, X_\Dset;Y_d), \label{eq:Odf_outage}
	\end{equation}
	where $X_\Dset$ is a set of random variables, one for each relay in $\Dset$, such that $\{X_s, X_\Dset\}$ has the same  joint distribution (Gaussian) as the transmitted symbols of the source and the relays in $\Dset$. 
	
	For determining which relays of the typical cluster belong to $\Dset$ we consider that all the relays and source in the network, including those of the typical cluster, are transmitting, and we determine which relays of the typical cluster can decode the transmission from the typical source under this condition. That is, 
	%The problem of determining $\Dset$ is very complex. This is because it would be necessary to consider all the possible combinations of transmitting relays and determine for each combination which of these relays can decode the transmission from their source. Then we should choose a combination of relays under a certain criterion, for example, the combination under which all the transmitting relays can decode which minimizes the outage probability. This is computationally very hard and unrealistic because it requires large scale coordination in the network. For this reason, for ODF we consider a simplified approach which leads to an upper bound on the outage probability: 
	%for ODF we assume that the relays outside the typical cluster are always on ($\Ract_x =\Bset_x \forall x \in \Phi_s$), and that a relay in the typical cluster will transmit when it can decode tr
	the $i$-th relay $\Ract_s$ will belong to $\Dset$ whenever the rate satisfies:
	\begin{equation}
	R < I(X_s;Y_i), \label{eq:rel_dec_cond}
	\end{equation}
	where $Y_i$ is the received signal at the relay when all the relays in the network (in $\Ract_s$ and $\Ract_x$, $\forall x\in \Phi_s$) and sources are transmitting, and their transmissions are treated as noise for decoding.  Also, to evaluate the outage event (\ref{eq:Odf_outage}) we consider that all the relays outside the typical cluster are transmitting ($\Ract_x=\Bset_x$, $\forall x\in \Phi_s$). This effectively gives an upper bound to the best OP attainable through ODF; this is because each time a relay is set to transmit, the interference in the network changes, and some relays which were previously decoding may not be able to decode any more. This means that a network coordination strategy could be considered in which a different (optimal) number of relays could be active in each cluster to minimize the OP. However, this is impossible to implement or consider because of the large number of relays in the network and the large-scale coordination that would be required.
	
	% because of the large number of relays involved in the network.% so the approach of considering that all the relays  outside the typical cluster are active is considered.
	
	\item \emph{Noisy Network Coding~\cite{lim_noisy_2011,7265060}:} in this protocol, all the relays in $\Ract_s$ perform NNC, even if they can decode the message from the source, that is $\Bset_s = \Ract_s$. Then the destination can try to decode the message by using the transmission of the source and any combination of the relays in $\mathcal{A}_s$ (treating the rest as noise). Thus  outage event is declared when the attempted rate $R$ does not satisfy:
	\begin{equation}
	R < \max_{\Tset  \in  2^{\Ract_s}}\min_{\Sset \in 2^\Tset} \{ I(X_s, X_{\Sset}; \hat{Y}_{\Sset^c},Y_d|X_{\Sset^c}) - I(\hat{Y}_\Sset; Y_\Sset | X_s, X_\Tset,\hat{Y}_{\Sset^c},Y_d)\}, \label{eq:rateNNC}
	\end{equation}
	where $2^{\Ract_s}$ denotes the power set of $\Ract_s$.
	$\hat{Y}_{\Sset}$ and 	$\hat{Y}_{\Sset^c}$ are sets of random variables which are constructed from the received signals of the typical relays $Y_{1},...,Y_{n_r}$ as follows: for each $Y_i$ we define:
	\begin{equation}
	\hat{Y}_i = Y_i + Z_{c,i}, \label{eq:noise_add}
	\end{equation}
	where $Z_{c,i}$ is a complex circularly symmetric white Gaussian noise of zero-mean and variance $n_c$ independent of everything else. The addition of noise $Z_c$ acts as a compression on the received signal $Y_i$. Since conditioned on the point process and the fading coefficients the random variables $\{Y_{1},...,Y_{n_r}\}$ are jointly Gaussian, then $\{\hat{Y}_1,...,\hat{Y}_{n_r}\}$ are also Gaussian.	
	%Since at each time instant, conditioned on the fading coefficients and node positions, the observed signals $Y_i$ are Gaussian, then all the $\hat{Y}_i$ also is.
	 From (\ref{eq:rateNNC}) we see that each set $\Sset$ is a subset of $\{1,...,n_r\}$; then, for each of the indexes in $\Sset$ the random sets $\hat{Y}_{\Sset}$ and $\hat{Y}_{\Sset^c}$ are made of the $\hat{Y}_i$ whose indexes belong to $\Sset$ and $\Sset^c$, respectively. The sets $X_{\Sset}$ and $X_\Tset$ are constructed as $X_{\Dset}$ in the ODF protocol.
	\item \emph{Mixed Noisy Network Coding~\cite{Behboodi2015}:} this protocol is a combination of the two above. The relays in $\Dset$, which can decode the transmission from the source, cooperate as a set of distributed antennas, while the other ones in $\Ract_s\setminus \Dset$ use NNC. In this case, determining $\Dset$ by the procedure indicated in ODF does not provide an upper bound to the OP, because all the relays in the network are always transmitting. The destination then chooses which subset of $\Ract_s\setminus \Dset$ to use for decoding the message treating the rest as noise. Thus the attempted rate $R$ has to satisfy:
\begin{equation}
R < \max_{\Tset  \in  2^{\Ract_s \setminus \Dset}}\min_{\Sset \in 2^\Tset} \{ I(X_s, X_\Dset, X_{\Sset}; \hat{Y}_{\Sset^c},Y_d|X_{\Sset^c}) - I(\hat{Y}_\Sset; Y_\Sset | X_s,X_\Dset, X_\Tset,\hat{Y}_{\Sset^c},Y_d)\}.
\end{equation}	
All the definitions are the same as in the previous protocols.
\end{itemize}
For each of these basic protocols we consider two versions according to how the relays are activated:
\begin{itemize}
\item \emph{Standard versions of ODF, NNC and MNNC:} all the potential relays in a cluster may transmit, meaning that $\Ract_x = \{1,\ldots,n_r\} = \Ract_s$ for all $x\in\Phi_s$.
\item \emph{Interference aware versions: }a threshold activation  scheme for activating the relays in all clusters is employed, and this threshold can be used in the source-relay or relay-destination channels. These thresholds attempt to mitigate the interference by turning off relays which may not improve the performance of their cluster. If the source-relay threshold is active, then each relay will be active if the channel from its source exceeds a predefined threshold. On the other hand, if the relay-destination threshold is active, then each relay will be active provided that  its channel towards its destination exceeds the threshold. 

\end{itemize}

\section{Procedure for the numerical evaluation of outage events} \label{sec:outage_ev}
%%%%%%%%%%%%%%%%%%%%%%%%%%%%%%%%%%%

In what follows we describe the outage events corresponding to each of the protocols evaluated for the network model detailed in the previous sections. The procedure to perform the Monte Carlo simulation of the outage probabilities of each protocol  consists in drawing multiple realizations of the network parameters, i.e., node positions and fading coefficients, then finding the mutual informations which appear in the outage events and  checking if the selected rate $R$ is above or below the achievable rate given for each realization.
In order to do this, we perform the following procedure:
\begin{enumerate}
	\item We draw a realization of the network and the typical cluster according to the model in Section \ref{sec:model}.
%	\item Considering the chosen protocol, we decompose each mutual information appearing in the outage condition 
	\item Given the network realization and the chosen protocol, the mutual informations in the outage events for Section \ref{sec:rates} can be computed and the outage condition can be determined. To do this, the mutual informations are written in terms of differential entropies using the following standard identities~\cite{cover_elements_2006}:
		\begin{align}
		I(X;Y) &= h(X) + h(Y) - h(X,Y), \label{eq:minde1}\\
	I(X;Y|Z) &= h(X,Z) + h(Y,Z) - h(X,Y,Z) - h(Z). \label{eq:minde2}
		\end{align}
		where $X, Y, Z$ are continuous random variables and $h(\cdot)$ denotes differential entropy~\cite{cover_elements_2006}.	 
	Since the nodes use Gaussian signaling, conditioned on the realization of the network, all the random variables involved in the mutual informations are CCSG random variables. It is well known that for a CCSG random vector $\xvec$ with covariance matrix $\qmat_{\xvec}$, the entropy is~\cite{Telatar99}:
	\begin{equation}
	h(\xvec) = \log \det (\pi e \qmat_{\mathbf{x}}). \label{eq:covmat1}
	\end{equation}
Therefore, to determine if an outage event occurs we must compute the mutual informations in the outage events, which are written in terms of joint entropies and require all the covariance matrices of the random variables appearing in the joint entropies. 
	\end{enumerate}
The large computational burden of this procedure can be reduced by noticing that it is not necessary to calculate a covariance matrix for each differential entropy. If we compute the covariance matrix of all the random variables involved in the typical cluster and the chosen protocol  we can find the other ones which are required by deleting some of its rows and columns. This is the covariance matrix with largest dimentions for a given setup, and it is necessary to find it to calculate some of the mutual informations involved.
Also, another way to reduce the computational burden of the procedure for NNC and MNNC is to notice that if we find a set $\Tset  \in  2^{\Ract_s}$ such that the attempted rate exceeds the minimum corresponding to that set in the outage condition, then it is sufficient to declare an outage and no further computation is required.

%Therefore, to evaluate the outage condition for each protocol we proceed as follows:
%\begin{enumerate}
%\item We decompose each mutual information appearing in the outage condition in terms of differential entropies using  (\ref{eq:minde1}) and (\ref{eq:minde2}).
%\item Using that, conditioned on all the positions and fading coefficients, all the random variables which appear in those events are jointly CCSG, we find their covariance matrix and use (\ref{eq:covmat1}) to find all the differential entropies and hence all the mutual informations.
%\item After finding all the mutual informations, we evaluate the outage condition by testing whether the attempted rate $R$ is above or below the threshold of the corresponding protocol.
%\end{enumerate}
In what follows we focus on the typical cluster and describe how to find the largest covariance matrix of all the random variables involved for a chosen protocol.
%, and an algorithm to evaluate the mutual information.
First, from the $n_r$ potential relays of the source, we have a subset $\Ract_s$ of the relays which can transmit (all of them if the protocol is not interference aware).
Among these, only a subset $\Dset \subseteq \Ract_s$ will be able to decode the message of the source and act as secondary antennas. To determine those relays, we must evaluate for which relays condition (\ref{eq:rel_dec_cond}) is met. This condition using (\ref{eq:minde1}) and (\ref{eq:covmat1}), can be written for the $i$-th relay as:
$$R < \log_2 \left( 1 + \frac{|g_{s,i}|^2 P_S}{I_{r_i} + \sum_{m=1, m\neq i}^{n_r} \bigone_{\{m \in \Ract_s\}} |g_{m,i}|^2 P_r}\right).$$
The relays which do not fulfill this condition (cannot decode) may perform NNC (if the NNC or MNNC protocols are employed) or remain silent. Let us define the set of relays which perform NNC as:
\begin{equation}
\Cset_s \triangleq \{t_1, \ldots, t_{N_{\nnc}}\} \subseteq \Ract_s \setminus \Dset.
\end{equation}
Notice that we cannot say that $\Cset_s = \Ract_s \setminus \Dset$ because for example in ODF, $\Cset_s= \emptyset$ while $\Ract_s \setminus \Dset$ will have the relays that cannot decode the message from the source. We define as:
\begin{equation}
\mathcal{D} \cup \Cset_s = \{u_1, \ldots, u_{N_a}\} \subseteq \{1, \ldots, n_r\}.
\end{equation}
the set of relays of the typical cluster which will be transmitting using either ODF or NNC. In order to find the correlation matrix of the random variables appearing in the outage events, we define a random vector that has the signals which are transmitted and received by the nodes of the typical cluster, and also the noise signals which are added at each relay according to (\ref{eq:noise_add}) for NNC: 
\begin{equation}
%\uvec \triangleq \left[\begin{array}{ccccccccccc} X_{s}&X_{u_1} & \ldots & X_{u_{N_a}} &Z_{u_1} &\ldots & Z_{u	_{N_a}} & Z_{d} & Z_{c, t_1}  &\ldots & Z_{c,t_{N_{\nnc}}} \end{array}\right]^T.
\uvec \triangleq \left[X_{s},X_{u_1} , \ldots , X_{u_{N_a}},Z_{u_1},\ldots, Z_{u	_{N_a}} , Z_{d} , Z_{c, t_1} ,\ldots ,Z_{c,t_{N_{\nnc}}} \right]^T.
\end{equation}
We do not include the time instant, since, as mentioned before, the distribution of the vector does not depend on it. In the case of ODF, since $\Cset_s=\emptyset$ there is no need for compression noise random variables.
Since all the transmitters in the network use independent Gaussian signaling the correlation matrix $\qmat_{\uvec}$  of the vector $\uvec$ is block diagonal:
\begin{equation}
	\qmat_{\uvec} \triangleq \Ex\left[\uvec \uvec^*\right] = \left[\begin{array}{c|c|c|c}P_s & \multicolumn{3}{c}{\Zblock_{1,2N_a+N_{\nnc}+1} }\\\hline
	\multirow{3}{*}{$\Zblock_{2N_a+N_{\nnc}+1,1}$} & P_r \Ident_{N_a} & \multicolumn{2}{c}{\Zblock_{N_a,N_a+N_{\nnc}+1}} \\\cline{2-4}	
	& \multirow{2}{*}{$\Zblock_{N_a+N_{\nnc}+1,N_a}$}&  \qmat_Z &\Zblock_{N_a+1,N_{\nnc}}\\ \cline{3-4	}
	& &\Zblock_{N_{\nnc},N_a+1}& n_c \Ident_{N_{\nnc}}\end{array} \right], \label{eq:CU}
	\end{equation}
 where:
\begin{itemize}
	\item $\Ident_{n}$ denotes and identity matrix of $n\times n$;
	\item $\Zblock_{n,m}$ denotes a block of zeros of $n\times m$;
\item $\qmat_Z$ is a square matrix of side $N_a+1$ containing the correlation between the interference random variables:\addtocounter{equation}{1}
%\begin{align}
%\beta_{r_i,r_k} &\triangleq \Ex \left[ Z_{r_i,j} Z_{r_k,j}^* \right] \\
%\beta_{r_i,d} &\triangleq \Ex[Z_{r_i,j} Z_{d,j}^*]  \\
%I_{r_i} &\triangleq \Ex[|Z_{r_i,j}|^2] \\
%I_d &\triangleq \Ex[|Z_{d,j}|^2]
%\end{align}
%so we can write the Hermitian matrix $C_I$ as:
%\begin{equation}
%\qmat_I = \left[\begin{array}{cccccc}
%I_{r_1} & \beta_{r_1,r_2} & \multicolumn{2}{c}{\ldots}& \beta_{r_1,r_{N_r}}  &\beta_{r_1,d} \\
%\beta_{r_2,r_1} & I_{r_2} & \beta_{r_2,r_3} & \ldots & \beta_{r_2,r_{N_r}} & \beta_{r_2,d} \\ \vdots&&\multicolumn{2}{c}{\multirow{2}{*}{$\ddots$}}  & & \vdots \\\vdots & && \\
%\beta_{r_{N_r},r_1} & \beta_{r_{N_r},r_2} &  & \ldots & I_{r_{N_r}} & \beta_{r_{N_r},d} \\
%\beta_{d,r_1} & \beta_{d,r_2} & & & \beta_{d,r_{N_r}} &  I_d
%\end{array}\right].
%\end{equation}
%\begin{equation}
%[\qmat_Z]_{i,j} = \begin{cases} I_{u_i} & \text{ if $i=j, i = 1,...,N_r$} \\ \beta_{u_i,u_j} & \text{ if $1 \leq i,j \leq N_r, i \neq j$} \\ I_d & \text{ if $i=j=N_r+1$.} \\ \beta_{u_i, d} & \text{if $j=N_r+1$}\\\beta_{d, u_i} & \text{if $i=N_r+1$}\end{cases}
%\end{equation}
\begin{equation}
\hspace{-6mm}\qmat_Z \hspace{-.5mm} \triangleq \hspace{-.5mm} \left[ \hspace{-.5mm} \begin{array}{ccccc}
I_{u_1} & \beta_{u_1,u_2} &  \ldots &  \beta_{u_{1},u_{N_a}} & \beta_{u_1,d}\\
\beta_{u_2,u_1} & I_{u_2} &  \ldots &  \beta_{u_{2},u_{N_a}} & \beta_{u_2,d}\\
\vdots & &\ddots & & \vdots\\
\beta_{u_{N_a},u_{1}} & \beta_{u_{N_a},u_{2}}& & I_{u_{N_a}}&   \beta_{u_{N_a},d}\\
\beta_{d,u_1} & \beta_{d,u_{2}} & \ldots  &\beta_{d,u_{N_a}}  & I_d  
\end{array} \hspace{-.5mm} \right]\hspace{-1mm} .\hspace{-4mm}
\end{equation}
\end{itemize}
 Then we define a random vector containing all involved random variables that appear in the outage events:
\begin{equation}
%\vvec \triangleq \left[\begin{array}{ccccccccccc} X_{s} & X_{u_1} & \ldots & X_{u_{N_a}} & Y_{u_1} & \ldots & Y_{u_{N_a}} & Y_{d} & \hat{Y}_{t_1} & \ldots & \hat{Y}_{t_{N_{\nnc}}}\end{array}\right] ^T \label{eq:vvec1}
\vvec \triangleq \left[X_{s}, X_{u_1},\ldots,X_{u_{N_a}},Y_{u_1}, \ldots, Y_{u_{N_a}}, Y_{d}, \hat{Y}_{t_1}, \ldots, \hat{Y}_{t_{N_{\nnc}}}\right] ^T. \label{eq:vvec1}
\end{equation}
Now we find the matrix which allows us to write $\vvec$ in terms of $\uvec$.  	
We define a fading coefficient matrix between the source and the relays towards the destination and the relays:
\begin{equation}
\hmat \hspace{-.5mm} \triangleq \hspace{-.5mm}  \left[ \begin{array}{c @{\hspace{0.75\tabcolsep}} c  @{\hspace{0.75\tabcolsep}} c @{\hspace{0.75\tabcolsep}}c @{\hspace{0.75\tabcolsep}}c @{\hspace{0.75\tabcolsep}}c}
h_{s,u_1} & 0 & h_{u_2,u_1} & h_{u_3,u_1} & \ldots & h_{u_{N_a},u_1} \\
h_{s,u_2} & h_{u_1,u_2} & 0 & h_{u_3,u_2} & \ldots & h_{u_{N_a},u_2} \\
\vdots & & & \ddots& & \\
\vdots & & & & \ddots& \\
h_{s,u_{N_a}} & h_{u_1,u_{N_a}} & \ldots  & & & 0
\end{array} \right].
\end{equation}
With this we may write:	
\begin{equation}
\vvec = \tilde{\hmat} \uvec
\end{equation}
with $\tilde{\hmat}$ a matrix:
\begin{equation}
\hspace{-.5mm}\tilde{\hmat} \hspace{-1mm} \triangleq \hspace{-1mm} \left[ \begin{array}{c @{\hspace{0.1\tabcolsep}}| c @{\hspace{0.5\tabcolsep}}| @{\hspace{0.5\tabcolsep}}c| c}
{\Ident_{N_a+1}} & \multicolumn{3}{c}{\Zblock_{N_a+1,N_a+N_{\nnc}+1}} \\\hline
\multirow{2}{*}{$\hmat$} & \Ident_{N_a}&\multicolumn{2}{c}{\Zblock_{N_a,N_{\nnc}+1}} \\ \cline{2-4}
&\Zblock_{1,N_a} & 1 & \Zblock_{1,N_{\nnc}} \\ \hline
\hmat_{ (t_1, \ldots, t_{N_{\nnc}})}  & \Ident_{N_{a}(t_1, \ldots, t_{N_{\nnc}})} &\Zblock_{N_{\nnc},1} & \Ident_{N_{\nnc}}
\end{array}\right]
\end{equation}
where:
\begin{itemize}
	\item $\hmat_{(t_1, \ldots, t_{N_{\nnc}})}$ is a matrix containing the rows of $\hmat$ indicated in the vector $[t_1, \ldots, t_{N_{\nnc}}]$, that is:
	\begin{equation}
	[\hmat_{(t_1, \ldots, t_{N_{\nnc}})}]_{i,j} = \hmat_{t_i,j}. \label{eq:htilde1}
	\end{equation}
	\item $\Ident_{N_{a}(t_1, \ldots, t_{N_{\nnc}})}$ is matrix of size $N_{\text{nnc}}\times N_a$  obtained by taking the identity matrix of size $N_a$ and keeping the rows indicated in the vector $[t_1, \ldots, t_{N_{\nnc}}]$, that is:
\begin{equation}
[\Ident_{N_a(t_1, \ldots, t_{N_{\nnc}})}]_{i,j} = \begin{cases} 1 & \text{if $j=t_{i}$,} \\
0 & \text{otherwise.}\end{cases}
\end{equation}

\end{itemize}
Then the covariance matrix of the vector $\mathbf{v}$ can be found as:
\begin{equation}
\qmat_\vvec = \Ex \left[ (\tilde{\hmat} \uvec)(\tilde{\hmat} \uvec)^*\right] = \tilde{\hmat} \qmat_{\uvec} \tilde{\hmat}^*.
\end{equation}
As mentioned above, the matrix $\qmat_\vvec$ is calculated only once for each realization of the network and is used to find all the entropies required to evaluate the outage events. To evaluate the joint entropy between any of the variables in $\vvec$ one must take the matrix $\qmat_\vvec$ and delete the rows and columns corresponding to the elements of $\vvec$ whose entropy one does not need to calculate. Then, applying (\ref{eq:covmat1}), the joint entropy, and hence the mutual information between any of the variables in $\vvec$ can be found. 

Following this procedure the outage events for the selected protocol can be evaluated and it can be determined if an outage has taken place for the realization of the network.

\section{Numerical Results and Discussions} \label{sec:numerical}
%{\color{red} Describe Cluster and computational complexity}
%%%%%%%%%%%%%%%%%%%%%%%%%%%%%%%%%%%

In performing the simulation of the outage probabilities for each of the protocols a large number of parameters were swept ($(\lambda_r, n_r, P_r, n_c)$ among others) over a wide range, and a large number of realizations of the network ($>10^5$) where drawn for each setup. Given this and the complexity of the outage events involved, a standard desktop computer could not be employed to perform the simulations. For this reason, the Tupac supercomputer cluster hosted at CSC-CONICET (http://tupac.conicet.gov.ar) was employed. With these powerful machines, we are able to perform the large-scale simulations required for this complex networks and protocols. 

In what follows we describe the simulation setup. The density of sources is $\lambda_s = 10^{-4}$ nodes/unit area. The destination is located at the origin and its source is located at $p_\mathbf{s} = (-10, 0)$. The relays are chosen as the nearest neighbors of a point $c_{p_\mathbf{s}}$ which lies on the line between the source and the destination, according to (\ref{eq:center}), where $\varepsilon=0$  implies the relays are centered around the source, and $\varepsilon=1$ means they are centered around the destination. The density of potential relays $\lambda_r$ is chosen as a multiple of the density of sources. Each source can use at most $n_{r}$ relays (the same for all clusters). 
%In the case of ODF we assume that all the relays of the clusters which are not the typical cluster always remain on, unless the \emph{interference-aware} protocols are used. This results in an upper bound on the interference for the case of ODF.  
The sources transmit with unit power $P_s=1$ and the relays with a fixed power $0 \leq P_r \leq P_s$. For the case of protocols involving NNC, the compression noise variance $n_c$ is optimized for each network setup (the optimized value of $n_c$ was always between$\left[10^{-8}, 10^{-2}\right]$). The attempted rate in all cases is $R=1$ bit/use, and the path loss exponent is $\alpha=4$.
Finally, each Monte Carlo simulation was obtained by averaging at least $10^5$ realizations of the network. We compare the OP of the protocols with a point-to-point transmission without involving  relays in the network, which is~\cite{baccelli_aloha_2006}
\begin{equation}
P_{\text{out,DT}}(R) = 1 - e^{-\lambda_s C T^{2/\alpha} ||p_\mathbf{s}||^2},
\end{equation}
where:
\begin{gather}
T= 2^R-1,\\
C = \frac{2\pi }{\alpha} \Gamma\left(\frac{2}{\alpha}\right)  \Gamma\left(1-\frac{2}{\alpha}\right), \label{eq:C}
\end{gather}
and $\Gamma(z) = \int_0^\infty t^{z-1} 	e^{-t}dt$ is the standard Gamma function.

In what follows we consider three main questions regarding the three protocols: the behavior of the OP as a function of the relative density between sources and relays, the behavior of the OP as a function of the relay transmission power, and the dependence of the OP with the point around which the relays are chosen. After observing this behavior we analyze the OP that can be obtained by optimizing the relay transmission power and using interference aware relays, which turn themselves off if the channel amplitude towards their destination or source do not exceed a predefined threshold. 

%\textcolor{blue}{	TODOS LOS RELAYS TRATAN DE DECODIFICAR Y SI NO PUEDE, HACEN CF Y LA DESTINACION SELECIONA LOS RELAYS QUE VA DECODIFICAR ENTRE AQUELLOS QUE HACEN CF, TRATANDO COMO INTERFERENCIA A LOS OTROS. ES DECIR ES UNA MEZACLA DE LAS DOS SECCIONES PRECEDENTES.}
%\textcolor{blue}{	COMENTAR QUE CUANDO UNO SUPONE QUE LOS RELAYS DECODIFICAN UNO ESTA ASUMIENDO QUE TODOS LOS OTROS ESTAN TRANSMITIENDO Y GENERANDO INTERFERENCIA. ALGO MAS SOFISTICADO PRODRIA SER QUE LOS RELAYS DECIDAN (DEPENDIENDO DE LOS SNR ENTRE FUENTE Y DESTINACION) UN ORDEN A PARTIR DEL CUAL UNO EMPIZA Y OTRO TERMINA. EN ESTE CONTEXTO EL QUE EMPIEZA VA AYUDAR A TODOS Y ASI SUCESIVAMENTE. HAY QUE VER SI COMENTAMOS ESTO O NO?}
%\textcolor{green}{QUESTIONS: CUANTOS RELAYS ESTAN ACTIVOS EN PROMEDIO? CUAL SERIA LA POTENCIA ADEQUADA PARA LOS RELAYS? COMO SE COMPARA LA ULTIMA RESPUESTA AL CASO SIN INTERFERENCIA?}

%\textcolor{blue}{	TODOS LOS RELAYS HACEN CF Y LA DESTINACION SELECIONA LOS RELAYS QUE VA DECODIFICAR, TRATANDO COMO INTERFERENCIA A LOS OTROS}
%In this protocol all the relays of the typical cluster transmit and the destination choose which relays to decode, treating the rest as noise.

\subsection{Dependence of the OP with the Relay Density}
\begin{figure} [!th]
	\centering
%	\begin{minipage}[t]{0.49 \textwidth}
\includegraphics[]{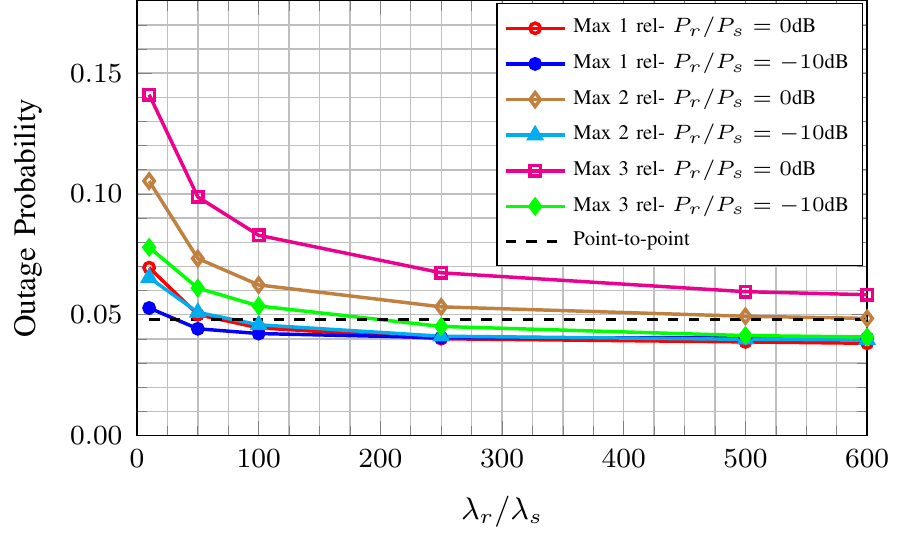}
		\caption{Outage probability for ODF as a function of the relative density between relays and sources, for different relay-source relative transmission powers. $\lambda_s =10^{-4}$, $R=1$b/use, $\alpha=4$.  Relays are chosen centered around the source ($\epsilon = 0$).}
		\label{fig:ODF1}
%	\end{minipage}
\end{figure}
\begin{figure}
%	\begin{minipage}[t]{0.49\textwidth}
		\centering
\includegraphics[]{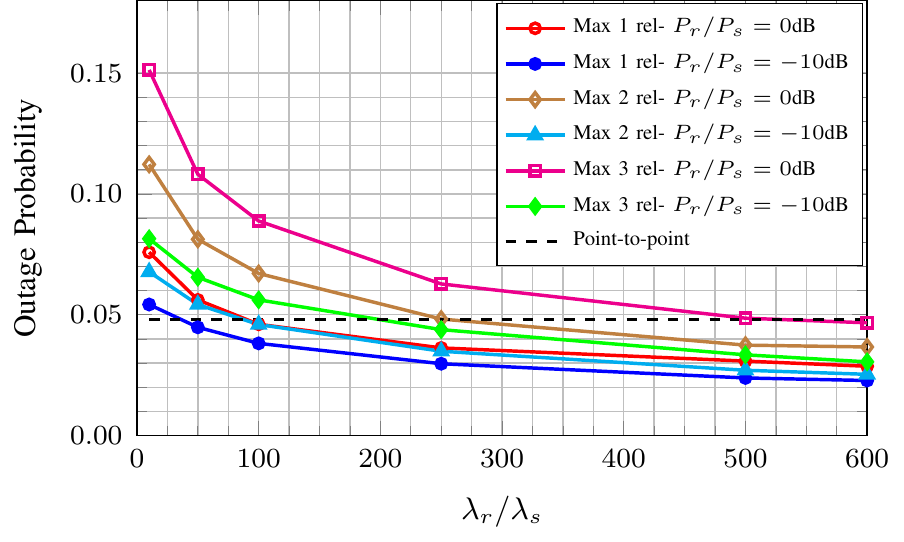}
		\caption{Outage probability for NNC as a function of the relative density between relays and sources, for different relay-source relative transmission powers. $\lambda_s =10^{-4}$, $R=1$b/use, $\alpha=4$. Relays are chosen centered around the destination ($\epsilon = 1$). OP is optimized w.r.t. the noise compression variance.}
		\label{fig:NNC1}
%	\end{minipage}
\end{figure}
\begin{figure}\begin{center}
%	\begin{minipage}[t]{0.49\textwidth}  \centering
\includegraphics[]{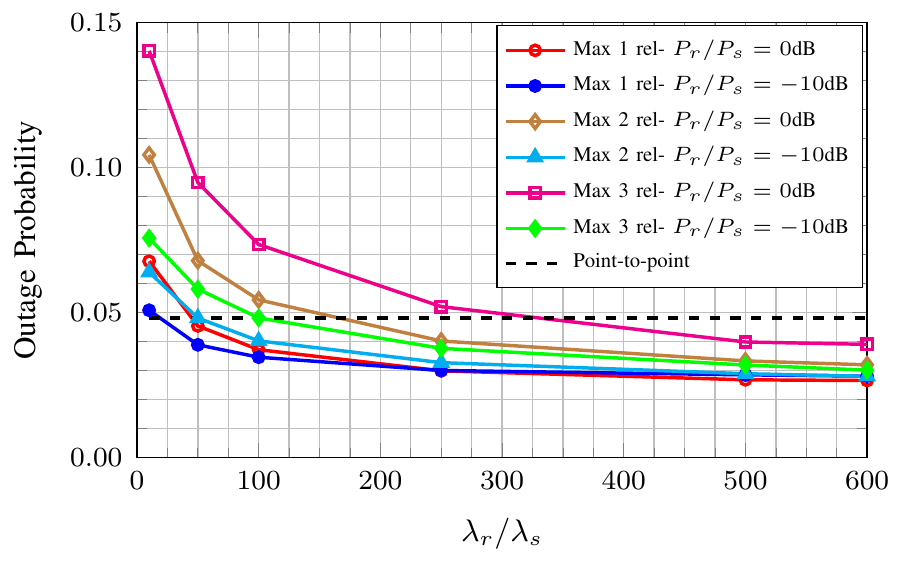}
		\caption{Outage probability for MNNC as a function of the relative transmission power between relays and sources, for different relative densities. $\lambda_s =10^{-4}$, $R=1$b/use, $\alpha=4$. Relays are chosen centered around the midpoint between a source and its destination ($\epsilon = 0.5$). The OP is optimized w.r.t. the noise compression variance.}
		\label{fig:MNNC1}
%	\end{minipage}
\end{center}
\end{figure} 

In Figs. \ref{fig:ODF1}, \ref{fig:NNC1} and \ref{fig:MNNC1} we plot the OP as a function of the relative relay density $\lambda_r/\lambda_s$, for different number of relays and fixed relative transmission powers $P_r/P_s$ for ODF, NNC and MNNC, respectively. The relays are chosen centered around the source ($\epsilon=0$) for ODF, around the destination ($\epsilon=1$) for NNC, and in the middle ($\epsilon=0.5$) for MNNC. The values of $\epsilon$ were chosen according to what is supposed to be the best option for each protocol. ODF is expected to work better when the relays are, on average, closer to the source because this increases the chances for relays to decode the transmission, while NNC will perform better if the relays are closer to the destination which receives a compressed version of the observation of the relays. Finally, MNNC is a combination of both protocols and thus, it is expected to  outperform the other  when the relays are on average midway between the source and the destination.

For all the protocols it is interesting to observe that cooperation is more beneficial when the density of potential relays is much larger than the density of sources (100 times or more according to the scenario), that is, the OP is decreasing with $\lambda_r/\lambda_s$ in all cases.  Also, the performance is either improved or does not decrease if the relays use a smaller transmission power than the source ($P_r/P_s=-10$dB in the plot). It is also worth to mention that using more relays does not improve the OP with respect to using a single one. This is because the additional relays in the other clusters increase the interference at the typical clusters, and  because the second and third relays are further away than the first one, so that the benefits of cooperation are reduced by path loss. Furthermore, the destination chooses the best set of relays for decoding and treats the rest as noise. Although cycling through all the combinations of relays improves the chances of decoding, the interference generated by treating the rest of the relays and noise, added to the interference from other clusters, does not result in any benefits in the OP. In the case of ODF, this conclusion may be affected by the assumption that we made in order to keep  the problem tractable that relays in the other clusters always remain on.

Although the OP is decreasing in $\lambda_r/\lambda_s$, the gains, however, are not the same in this regime.
%, with NNC and MNNC achieving a 50\% reduction in the OP and ODF achieving a much smaller improvement (for this setup at least). In terms of the observed gains in OP, 
For the case of ODF the gains are not substantial in this regime, that is, when the relays are chosen to be located around the source. Whereas based on NNC and MNNC a reduction of the OP close to 50\% sounds feasible  when the relays are chosen closer to the destination.

Perhaps, the most interesting conclusion from this setup is that cooperation appears to be most useful in networks in which relays from a dense network of low-power nodes (compared to the sources), such as sensor or cellphone networks.

\subsection{Dependence of the OP with the Relay Transmission Power}

In Figs. \ref{fig:ODF2}, \ref{fig:NNC2}, and \ref{fig:MNNC2}  we plot the OP for ODF, NNC and MNNC as a function of the relative transmission power between relays and sources, for different relative relay-source densities. As in the previous section, the relays are chosen centered around the source ($\epsilon=0$) for ODF, around the destination ($\epsilon=1$) for NNC, and in the middle ($\epsilon=0.5$) for MNNC. 
For the case of NNC (Fig. \ref{fig:NNC2}) we see that the OP is increasing in the relay transmission power for moderate or large relay densities. On the other hand, for ODF or MNNC the OP is also increasing in general, except when a single relay is used and the density of relays is large. In that case the OP is decreasing in the relay transmission power, but  the gains are marginal to warrant the increase in relay transmission power. 

As we have seen before, without any optimization in the relay transmission power or position we see that large gains in terms of OP can be achieved through NNC or MNNC while in the case of ODF this setup is not the most convenient.
%We see that the OP is, in general, increasing in the transmission power. In some cases, like when only a single relay is used in ODF or MNNC with a large $\lambda_r/\lambda_s$, the OP appears to be decreasing in the relay transmission power 

%In Fig. \ref{fig:NNC2} we plot the OP for NNC as a function of the relative transmission power between relays and sources, for different relative densities and transmission powers. The plot shows that it is also beneficial for this protocol to reduce the transmission power of relays, in particular, it is best to use a single low-power relay which is as close as possible to the destination. 

%In Fig. \ref{fig:MNNC2} we plot the OP as a function of the relay transmission power, for different relay densities and $\epsilon=0.5$. In this case, we see that if the density of relays is large and only one relay is used, it is better that the relay use a power comparable to the source. The difference is not large compared to using a relay with a smaller transmission power. Then if more relays are used or if the density of relays decreases (potential relays are further away on average), it is better to reduce the relay transmission power. 

\begin{figure} [!t]
	\centering
\begin{minipage}[t]{0.49\textwidth} 
\includegraphics[]{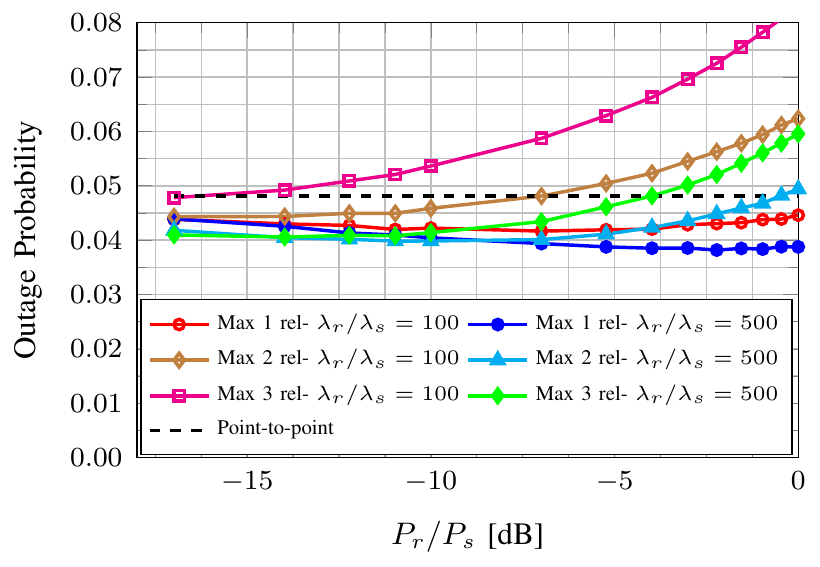}
	\caption{Outage probability for ODF as a function of the relative transmission power between relays and sources, for different relative densities. $P_r/P_s =-10dB$. $\lambda_s =10^{-4}$, $R=1$b/use, $\alpha=4$.  Relays are chosen centered around the sources ($\epsilon = 0$).}
	\label{fig:ODF2}
	\end{minipage}
\begin{minipage}[t]{0.49\textwidth}
	\centering
\includegraphics[]{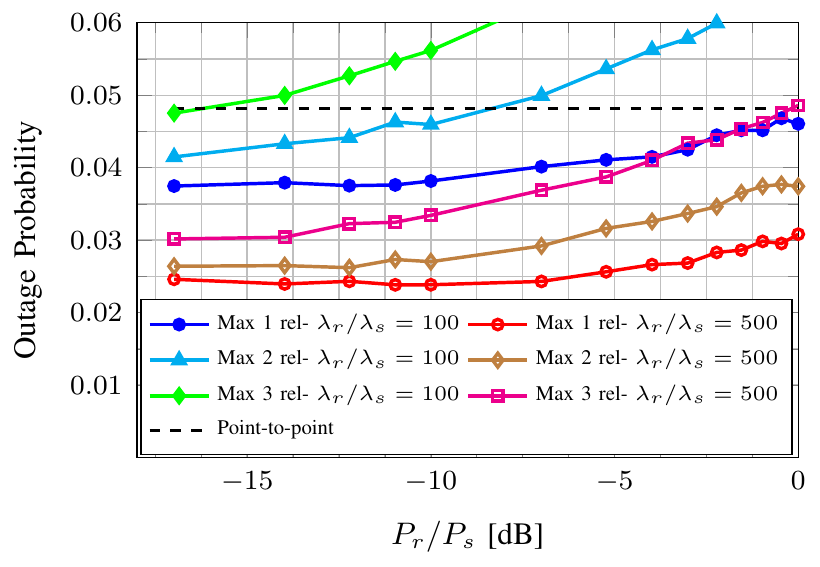}
			\caption{Outage probability for NNC as a function of the relative transmission power between relays and sources, for different relative densities. $P_r/P_s =-10dB$. $\lambda_s =10^{-4}$, $R=1$b/use, $\alpha=4$. $d=10$. Relays are chosen centered around the destination ($\epsilon = 1$). OP is optimized w.r.t. the noise compression variance.}
			\label{fig:NNC2}
\end{minipage}	
\begin{minipage}[t]{0.49\textwidth}
	\centering
\includegraphics[]{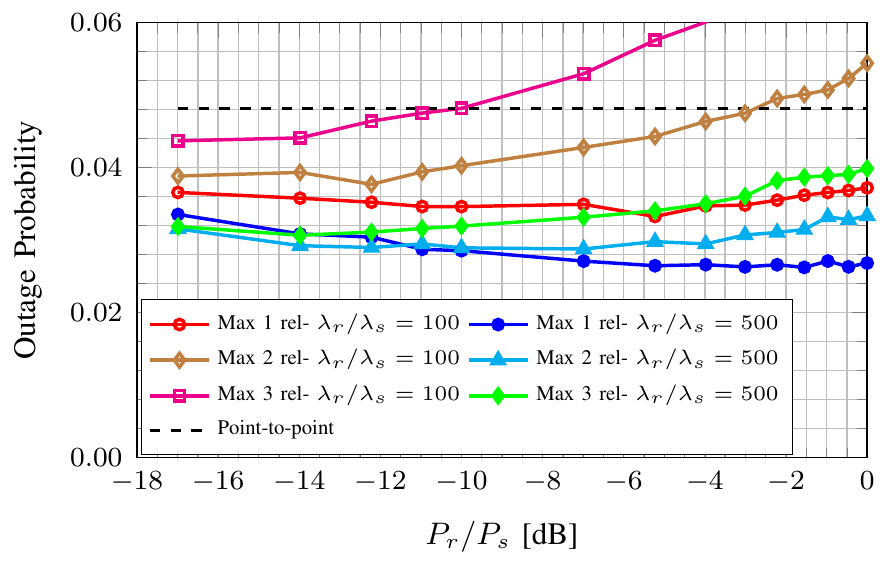}
	\caption{Outage probability for MNNC as a function of the relative transmission power between relays and sources, for different relative densities. $P_r/P_s =-10dB$. $\lambda_s =10^{-4}$, $R=1$b/use, $\alpha=4$. Relays are chosen centered around the midpoint between a source and its destination ($\epsilon = 0.5$). The OP is optimized w.r.t. the noise compression variance.}
	\label{fig:MNNC2}
\end{minipage}
\end{figure} 

\subsection{Dependence of the OP with the Relay Position}
In Figs. \ref{fig:ODF3},  \ref{fig:NNC3}, and \ref{fig:MNNC3}  we plot the OP as a function $\epsilon$, which indicates the center point around which the relays are chosen on the line between the source and the destination, for different relative relay-source densities. In all cases we set the relative transmission power $P_r/P_s = -10$dB, which has shown to be reasonable in previous plots. 

For the case of ODF (Fig. \ref{fig:ODF3}) we see that the biggest gains can be obtained when the center point is chosen near the midpoint between the source and destination, but closer to the destination. This is because at this position, on average, the quality of the source-relay and relay-destination channels are balanced, together with the different source and relay transmission powers, and the OP is minimized. In that case, if a single low-power relay is used, the OP is reduced by more than 40\% when $\lambda_r/\lambda_s \geq 500$, compared to a transmission without cooperation. In addition, a similar reduction can be achieved even if the relative density is smaller ($\lambda_r/\lambda_s \geq 250$). 

For  NNC (Fig. \ref{fig:NNC3}) the OP appears to be decreasing provided that $0 \leq \epsilon \leq 1$, implying  that the best would be to chose the relays centered around the destination. In this case the performance is very similar if one or two low-power relays are used and the dispersion in performance is smaller when compared to ODF. 
The potential gains of NNC are larger than that of ODF, but similar. For example, when $\lambda_r/\lambda_s = 500$ a reduction of more than 50\% in the OP can be achieved by using a single relay. 

In the case of MNNC (Fig. \ref{fig:MNNC3}), in addition to the plots with one relay and  $P_r/P_s=-10$dB, we also plot a curve with $P_s=P_r$, because in Fig. \ref{fig:MNNC2} we saw that this may be better for this protocol when at most one relay is used. The behaviour of the OP is similar to that of NNC in the sense that it is more convenient to chose the relays closer to the destination. %This may be because MNNC is combination of ODF and NNC: when the relay is closer to the source (on average), an increase in relay power improves the performance of the ODF component of the protocol, while 
When using a single relay, the potential gains of MNNC are better but similar to the gains obtained with NNC; for $\lambda_r/\lambda_s \geq 500$ a reduction greater than 55\% in the OP can be achieved when the relay uses a low power. For the case of single relay, when $P_r=P_s$ a larger reduction of the OP is obtained near the source, with a loss of performance near the destination as compared to using a low-power relay. This may be because MNNC is a combination of ODF and NNC; near the destination NNC will be dominant, which does not benefit from setting $P_r=P_s$, while near the midpoint ODF will be dominant and will benefit from $P_r=P_s$.
This results in that the OP remains almost constant when $0.3 \leq \epsilon \leq 1$ with a reduction of 45\% in the OP compared to a point-to-point transmission.
%This can be interpreted in the sense that if the relay is close to the destination, it is difficult for it to decode the message from the source, so that the DF part of the protocol does not bring great improvements over using NNC. On the other hand, when the relay is moved closer to the source, the loss of performance of the NNC part cannot be compensated by the increased chance of decoding by the relay. 
%If the relay power is increased (see curve in which $P_r=P_s$) the performance is reduced but becomes almost constant as long as the relay is not too close to the source.
\begin{figure} [!t]
	\centering
%\begin{minipage}[t]{0.49\textwidth}
\includegraphics[]{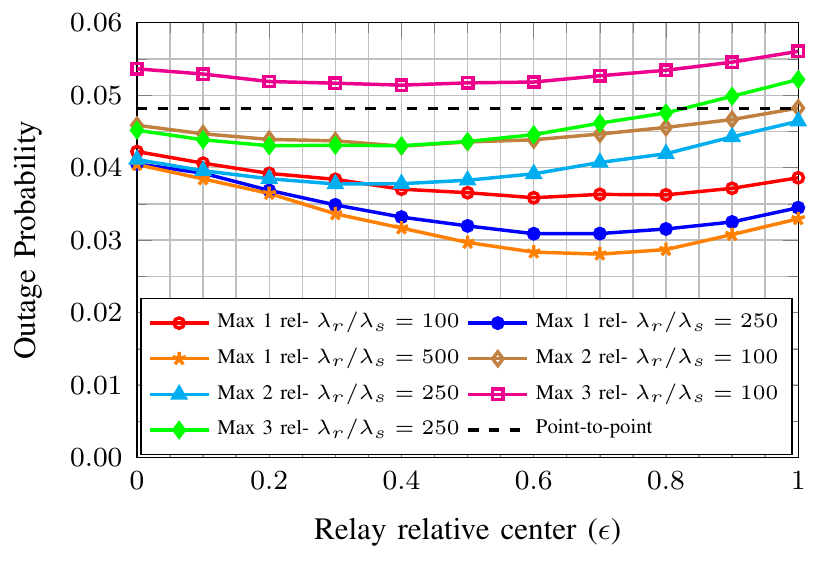}
	\caption{Outage probability for ODF as a function point around which the relays are chosen, one the line between source and destination, for different relative densities. $P_r/P_s=-10$dB, $\lambda_s =10^{-4}$, $R=1$b/use, $\alpha=4$.}
	\label{fig:ODF3}
%	\end{minipage}
\end{figure}
%\begin{minipage}[t]{0.49\textwidth}
\begin{figure}
	\centering
\includegraphics[]{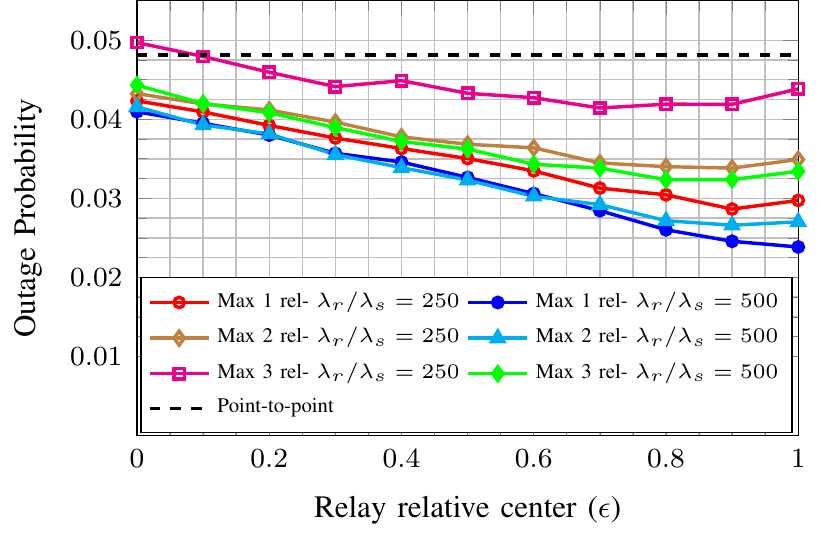}
	\caption{Outage probability for NNC as a function point around which the relays are chosen, one the line between source and destination, for different relative densities. $P_r/P_s=-10$dB, $\lambda_s =10^{-4}$, $R=1$b/use, $\alpha=4$. $d=10$. OP is optimized w.r.t. the noise compression variance.}
	\label{fig:NNC3}
\end{figure}

\begin{figure}
%\begin{minipage}[t]{0.49\textwidth}
	\centering
\includegraphics[]{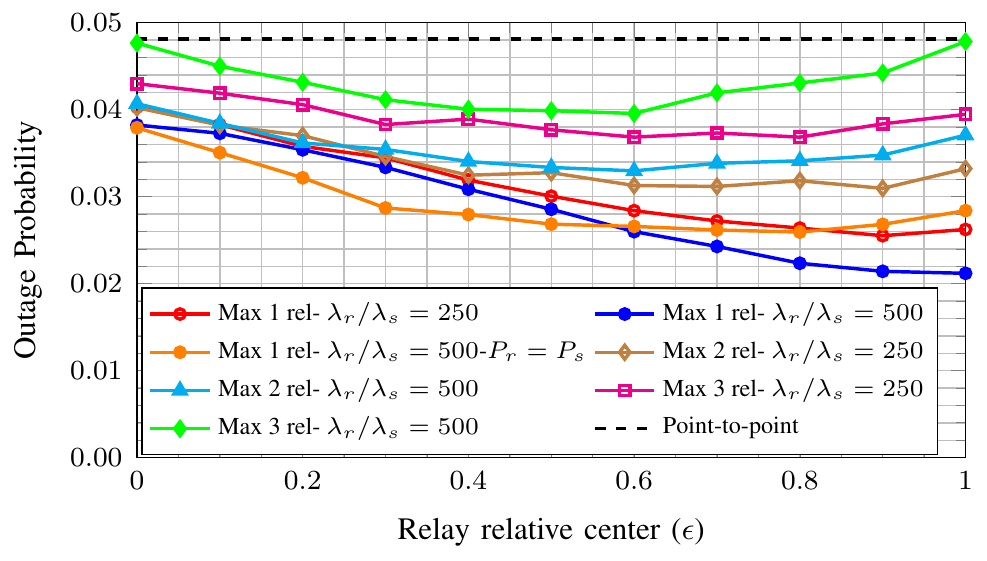}
	\caption{Outage probability for MNNC as a function point around which the relays are chosen, one the line between source and destination, for different relative densities. $P_r/P_s=-10$dB, $\lambda_s =10^{-4}$, $R=1$b/use, $\alpha=4$.  OP is optimized w.r.t. the noise compression variance.}
	\label{fig:MNNC3}
%\end{minipage}
\end{figure} 

%\textcolor{green}{CUANTOS RELAYS ESTAN HACIENDO DF Y CUANTOS CF? MEJORA CON RESPECTO A UN SOLO PROTOCOLO (ES DECIR DF O CF TODO EL TIEMPO)?}

%	\subsection{Mixed Noisy Network Coding with Interference-Aware Relays}

%\textcolor{blue}{	EN  LOS RELAYS DE DF QUE VAN A HACER INTERFERENCIA SON SOLO LOS QUE TIENEN UN BUEN CANAL FUENTE-RELAY (LOS OTROS SE APAGAN EN LA INTERFERENCIA). EN CF LOS RELAYS QUE VAN A HACER CF SON LOS QUE TIENE UN BUEN CANAL CON RESPECTO A LA DESTINACION, SINO SE APAGAN.}

%\textcolor{green}{CUANTO MEJORA CON RESPECTO A MIXED NOISY NETWORK CODING?}

\subsection{Effect of Power Optimization and Interference-Aware Relays}

%\textcolor{green}{	QUE ES LO MEJOR? QUE PASA SI TENEMOS UN CONSTRAINT DE POTENCIA INDIVIDUAL POR RELAY (MUCHO CHICO QUE LA FUENTE)?}

In this section we study the performance of the protocols and the optimal number of relays when the transmission power is optimized and when interference-aware relays are employed. When interference aware relays are used, we consider that only one of the thresholds (source-relay or source-destination) is used to activate the relays. In the previous section we saw that when fixed power is used, it is best to use a single relay. Since there are several parameters to consider for each protocol (number of relays, relay power, thresholds) and for space reasons, we focus first on MNNC which has the best performance. Afterwards we compare the performance of ODF and NNC to MNNC to see how close they are to MNNC when their respective parameters are optimized.
\begin{figure} [!t]
	\centering
\includegraphics[]{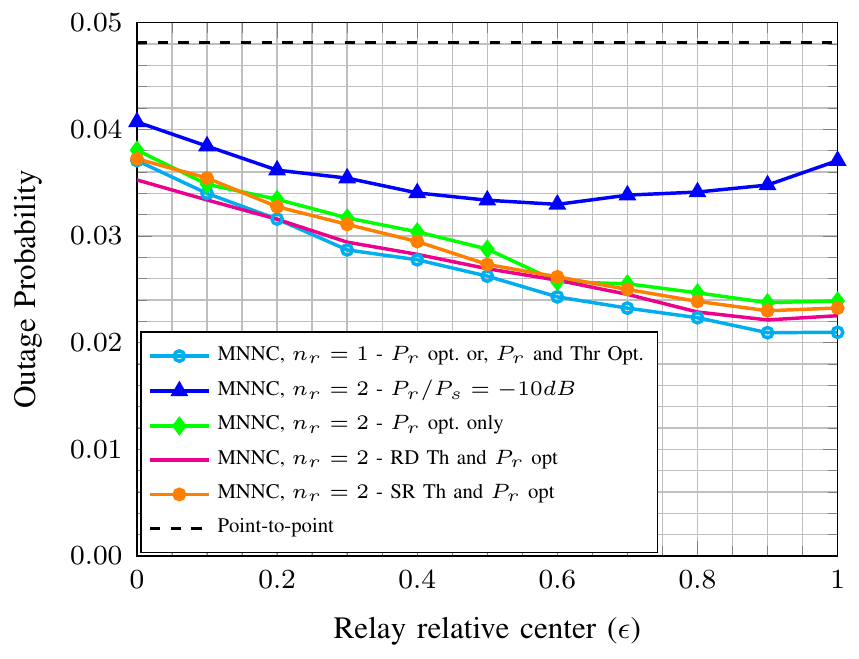}
	\caption{Outage probability for MNNC with optimized relay power and thresholds. $\lambda_r/\lambda_s=500$, $\lambda_s =10^{-4}$, $R=1$b/use, $\alpha=4$. }
	\label{fig:comp1}
\end{figure} 

In Fig. \ref{fig:comp1} we plot the OP of MNNC under different parameter optimizations when at most one or two relays are used. It can be seen that in the case of at most one relay ($n_{r}=1$), it is the same to optimize the relay power $P_r$ or to optimize at the same time the relay power and either of the thresholds, which implies that the thresholds are not necessary for this protocol in terms of OP, and that the relay could remain on all the time. In the case of at most two relays ($n_{r}=2$), optimizing only the relay power brings the OP very  close to that of using a single relay with optimized power. If, in addition, the thresholds are optimized as well, the performance becomes the same as using one relay. Nevertheless, the advantage of using two or more relays is that the transmission power of each relay can be reduced with respect to employing only one relay, as it is shown in Fig.~\ref{fig:optpowMNNC}.
\begin{figure} [!t]
	\centering
\includegraphics[]{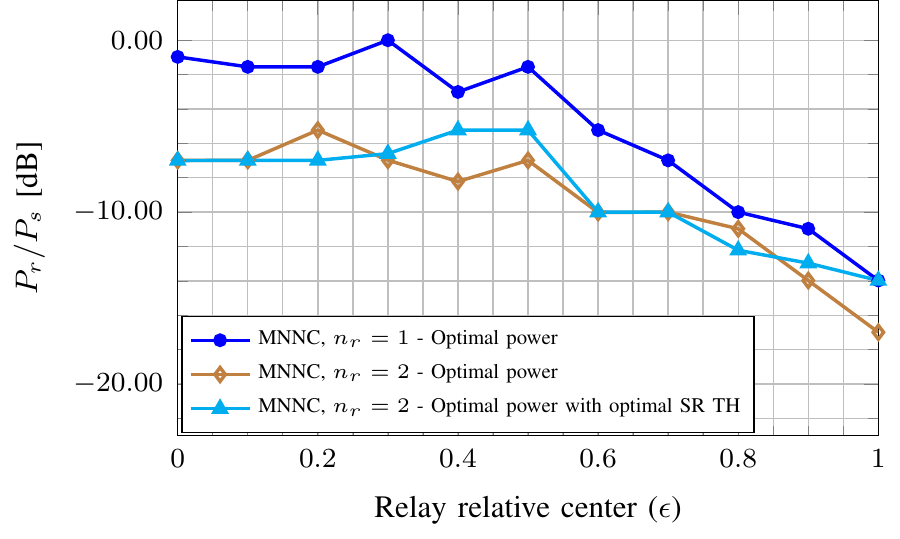}
	\caption{Optimal power for MNNC corresponding to the OP of Fig. \ref{fig:comp1}. Using more than one relay allows the reduction of the relay power. $\lambda_r/\lambda_s=500$, $\lambda_s =10^{-4}$, $R=1$b/use, $\alpha=4$.  }
	\label{fig:optpowMNNC}
\end{figure} 

Since the best performance with MNNC  is obtained by simply optimizing the relay power, in the following plots of ODF and NNC we use MNNC with $n_{r}=1$ and optimized $P_r$ as a benchmark comparison. In Fig. \ref{fig:comp2} we plot the OP of ODF for different number of relays when the power and the thresholds are optimized.
For the case of one relay, we see that by optimizing the transmission power only or both the transmission and the thresholds, a performance similar to that of MNNC can be achieved when the relay is chosen close to the source ($\epsilon < 0.4$), while the performance of MNNC is much better than that of ODF near the destination. On the other hand, and in contrast with MNNC, when at most two relays are used, the performance does not improve substantially by optimizing the relay transmission power or the thresholds. This is probably because the second relay is on average further and has a smaller probability of decoding the message of the source, and also because we assume that the relays outside the typical cluster are always on in the case of ODF even if they cannot decode.

Finally, in Fig. \ref{fig:comp3} we consider the optimization of the power and thresholds for NNC. We plot this together with the plots of MNNC and ODF for a single relay with optimized power. We optimized the power and the thresholds when using at most one or two relays, and found that the best option was to use an optimized RD threshold and power, though the gains were marginal compared to using a small fixed power. In the case of using up to two relays the performance improved notoriously compared to using a constant power, and the best option was to optimize both the transmission power and using an RD threshold. Similar to the case of MNNC, the performance obtained using two relays with optimized parameters was similar to those corresponding to a single relay, but each relay can use a smaller transmission power.
%In Fig.  \ref{fig:comp3} we plot the case of one or two relays optimized power or power and relay-destination thresholds, because the curves of the other possible setups are superimposed and clutter the plot. In Fig. \ref{fig:comp3} we plot the observed performance of NNC compared to ODF and MNNC with optimized power. As was observed in the previous section, ODF with optimized power is very close to MNNC when the relay is close to the source, while NNC is closer when the relay is close to the destination. In all cases, in terms of the OP it is enough to use a single relay, while in some cases, such as NNC, if two or more relays are used the OP performance is not degraded, but each relay can use a smaller transmission power than if a single relay is used.

\begin{figure} [!t]
	\centering
\includegraphics[]{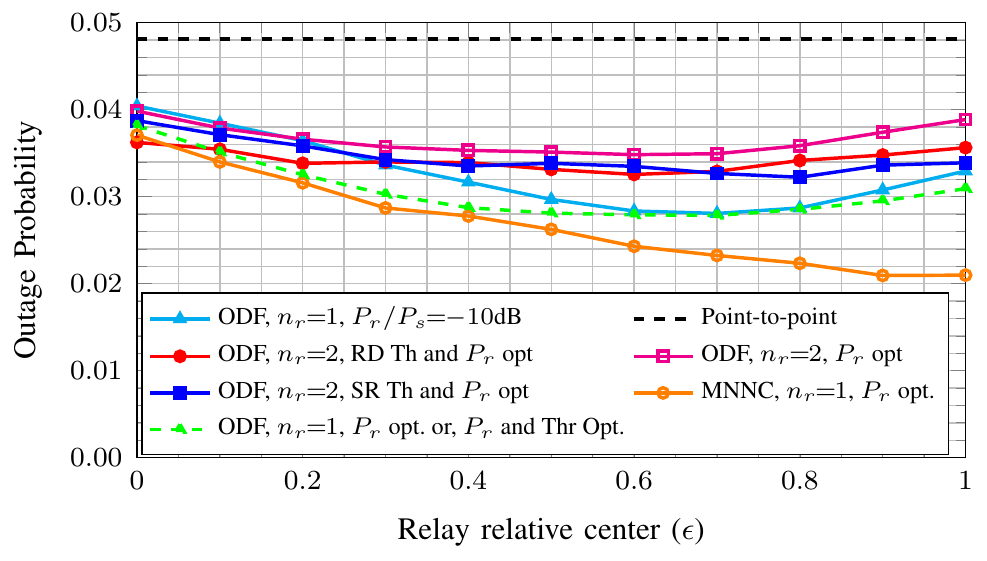}
	\caption{Outage probability for ODF with optimized relay power and thresholds, compared to the best performance obtained with MNNC. $\lambda_r/\lambda_s=500$, $\lambda_s =10^{-4}$, $R=1$b/use, $\alpha=4$.}
	\label{fig:comp2}
\end{figure} 

\begin{figure} [!t]
	\centering
\includegraphics[]{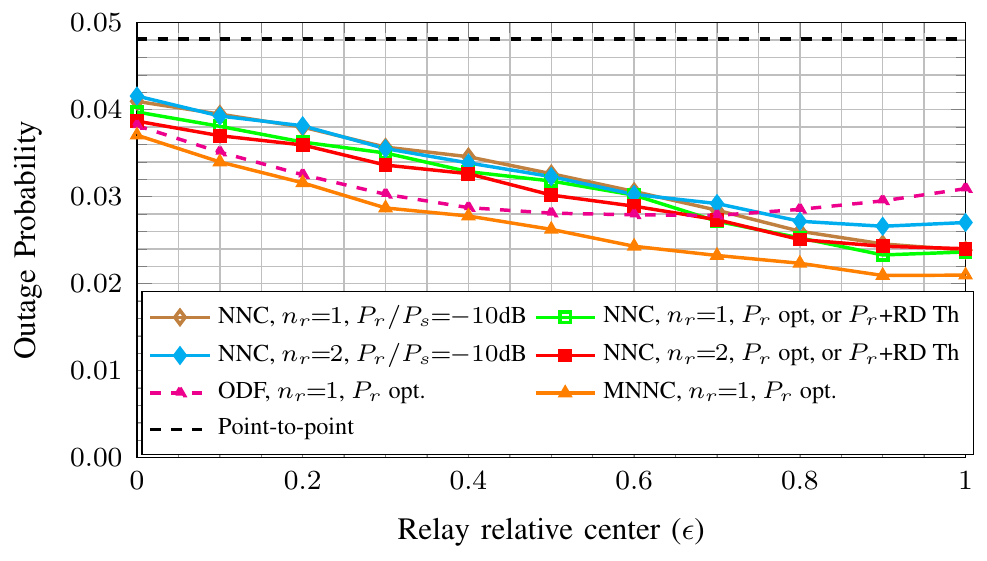}
	\caption{Outage probability for NNC with optimized relay power and thresholds, compared to the best performance obtained with MNNC and ODF. $\lambda_r/\lambda_s = 500$, $\lambda_s =10^{-4}$, $R=1$b/use, $\alpha=4$.  OP is optimized w.r.t. the noise compression variance.}
	\label{fig:comp3}
\end{figure} 
As a conclusion from this section we see that MNNC performs better than the other two protocols, which is reasonable since it can be interpreted as a combination of both, while ODF comes close when the relay is chosen near the source, and NNC comes closer provided that the relay is near the destination. In the three protocols we observed that using a single relay and optimizing the relay power is enough to attain the best performance in terms of the OP for each case. Furthermore, using more relays and optimizing the relay transmission power in the case of MNNC, or NNC, does not improve the OP compared to employing only one relay, but reduces the power consumption of each relay. In the case of ODF, using more relays does not reduce the OP even when the transmission power or thresholds are activated. This conclusion could be affected by the simplifying assumption that the relays in the other clusters are always active for ODF, even if they cannot decode.

%\subsection{Main observations}
%
%\begin{itemize}
%	\item Cooperation makes sense when the density of potential relays is much larger than the density of transmitters.
%	This is compatible for example with cellular networks in which the number of base stations is much smaller than the density of potentially user relays (D2D).
%
%	\item According to the maximum number of relays, and the point around which they are chosen, the OP may be increasing or decreasing in the relay transmission power.
%	
%	
%\end{itemize}

%%%%%%%%%%%%%%%%%%%%%%%%%%%%%%%%%%%
\section{Summary and Concluding Remarks}  \label{sec:conclusions}
%%%%%%%%%%%%%%%%%%%%%%%%%%%%%%%%%%%

In this paper we studied the performance of some of the most advanced cooperative full-duplex relaying protocols, namely ODF, NNC and MNNC in the context of a large wireless network. The following observations can be made:
\begin{itemize}
\item In general, cooperation was most useful when the density of relays was much larger than the density of sources, i.e., $\lambda_r \gg \lambda_s$, and in the three cases a large reduction of the OP (around 50\%) was obtained by using a single relay with a fixed low transmission power (compared to the source).

\item MNNC was shown to outperform NNC and ODF in all cases. NNC without optimizations was shown to perform better than ODF without optimizations near the destination, while ODF was better near the midpoint between the source and destination.

\item The best performance of each protocol was obtained by using a single relay and optimizing its transmission power, and using a threshold based activation scheme was not necessary. In the case of MNNC and NNC, if two relays were used and the transmission power was optimized, the same performance was be obtained as with one relay but each relay could use a much smaller transmission power. In ODF using more than one relay did not improve the OP but this could be influenced by the simplifying assumption that the relays in other clusters are always active for ODF, even if they cannot decode, which results in a worst-case interference scenario.
\end{itemize}  
Although the derivation of closed form expressions for the outage probabilities corresponding to the investigated networks and protocols were discarded due to the involved mathematical complexity,  the approach taken in this paper aimed at drawing important conclusions with respect to the use of state of the art relaying protocoles in large wireless networks.

\section*{Acknowledgment}
The authors are grateful to Dr. B. Blaszczyszyn  for valuable and inspiring discussions at the early stage of this work. %They are also grateful to the Associate Editor and to anonymous reviewers for their constructive and helpful comments on the earlier version of the paper, which helped us to improve the manuscript.

\bibliographystyle{IEEEtran}
\bibliography{IEEEabrv,relaygeo}
\end{document}